\begin{document}
\title{Truthful Two-Obnoxious-Facility Location Games with Optional Preferences and Minimum Distance Constraint}
\titlerunning{Truthful Two-Obnoxious-Facility Location Games}
% If the paper title is too long for the running head, you can set
% an abbreviated paper title here
%
\author{Xiaojia Han\inst{1} \and
Wenjing Liu\Envelope\inst{1,2} \and
Qizhi Fang\inst{1}}
\authorrunning{X. Han et al.}
% First names are abbreviated in the running head.
% If there are more than two authors, 'et al.' is used.
%
\institute{School of Mathematical Sciences, Ocean University of China, Qingdao, 266100, China \and
Laboratory of Marine Mathematics, Ocean University of China, Qingdao, 266100, China\\
\email{hxj1069@stu.ouc.edu.cn},
\email{\{liuwj,qfang\}@ouc.edu.cn}}
%
            % typeset the header of the contribution
%

% First names are abbreviated in the running head.
% If there are more than two authors, 'et al.' is used.

%
\maketitle              % typeset the header of the contribution
\begin{abstract}
In this paper, we study a truthful two-obnoxious-facility location problem, in which each agent has a private location in \([0,1]\) and a public optional preference over two obnoxious facilities, and there is a minimum distance constraint \(d \in [0,1]\) between the two facilities.
Each agent wants to be as far away as possible from the facilities that affect her, and the utility of each agent is the total distance from her to these facilities. The goal is to decide how to place the facilities in \([0,1]\) so as to incentivize agents to report their private locations truthfully as well as maximize the social utility.
First, we consider the special setting where \(d=0\), that is, the two facilities can be located at any point in \([0,1]\). We propose a deterministic strategyproof mechanism with approximation ratio of at most 4 and a randomized strategyproof mechanism with approximation ratio of at most 2, respectively. Then we study the general setting where \(d \in [0,1]\). 
We propose a deterministic strategyproof mechanism with approximation ratio of at most 8 and a randomized strategyproof mechanism with approximation ratio of at most 4, respectively. Furthermore, we provide lower bounds of 2 and \(\frac{14}{13}\) on the approximation ratio for any deterministic and any randomized strategyproof mechanism, respectively.

\keywords{Facility location game \and Mechanism design \and Strategyproof  \and Optional preferences \and Minimum distance constraint.}
\end{abstract}
\section{Introduction}\label{sec1}
In the classic truthful facility location problems, there is a set of agents who have private locations on the real line, and there is a facility (for example, a park or a school) that will be located somewhere on that line. The government aims to design a mechanism to determine the facility location that approximately optimizes some social objective. Since the mechanism operates based on the information reported by agents, each self-interested and rational agent may strategically misreport her location to manipulate the facility location, thus reducing her own cost or increasing her own utility. 
Therefore, the government's goal is to design mechanisms that guarantee truthful reporting by agents (e.g., strategyproof) while approximately optimizing the social objective. 
 The inconsistency between individual and collective interests implies that, in mechanism design, strategyproof mechanisms generally cannot guarantee the optimization of social objectives. Therefore, Procaccia and Tennenholtz~\cite{r1} introduced the approximation ratio, which is the worst-case ratio (over all possible instances) of the social objective value obtained by the mechanism to the optimal social objective value\footnote{This definition applies to cost-minimization social objectives. For utility-maximization social objectives, the approximation ratio is defined as the worst-case ratio (over all instances) of the optimal social objective value to the social objective value obtained by the mechanism.}, to measure the performance of strategyproof mechanisms. 
 
 In this paper, we study two-obnoxious-facility location games with optional preferences and minimum distance constraint with respect to two facilities. As a stylized example of our setting, consider locating a bar and a prison along a linear street. Both facilities are generally perceived as obnoxious by residents: the bar due to noise pollution and the prison due to potential psychological distress. However, some residents may be indifferent to whether these facilities are located close to their houses. Critically, due to the concern that former inmates congregating in nearby bars may cause disturbances, these facilities must maintain a minimum separation distance.

\subsection{Our Results}\label{subsubsec1}
Assuming that a set of agents have private locations in \([0,1]\) and public optional preferences over two obnoxious facilities. Specifically, some agents prefer to be far from exactly one facility (with utility defined as the distance to that facility), while the others prefer to be far from both facilities (with utility defined as the sum of distances to both facilities). There is a minimum distance constraint \(d \in [0,1]\) between the two facilities.
Our goal is to design mechanisms that take the agents' reported locations and the public preferences as input, and decide how to place the two facilities, so that (a) incentivize agents to report their locations truthfully, and (b) (approximately) maximize the social utility (i.e., the sum of the utilities of all agents).

In Section 3, we study the two-obnoxious-facility-location problem with optional preferences but without any distance constraint (i.e. \(d=0\)). We propose a deterministic strategyproof mechanism (Mechanism 1) with approximation ratio of at most \(4\), and a randomized strategyproof mechanism (Mechanism 2) with approximation ratio of at most \(2\).

In Section 4, we study the two-obnoxious-facility-location problem with optional preferences and minimum distance constraint (i.e. \(d \in [0,1]\)).  We design a deterministic strategyproof mechanism (Mechanism 3) with approximation ratio of at most \(8\), which switches between two cases depending on the cardinalities of the sets of agents with different preferences, and a randomized strategyproof mechanism (Mechanism 4) with approximation ratio of at most 2. Furthermore, we derive lower bounds of 2 and \(\frac{14}{13}\) on the approximation ratio for any deterministic and randomized strategyproof mechanism, respectively.

Our results are summarized in Table 1, where UB and LB indicate the upper bound and the lower bound of the approximation ratio, respectively.
\begin{table}
\caption{A summary of our results.}
\begin{tabular}{|l|l|l|}
\hline
\textbf{Minimum distance constraint} & \textbf{Deterministic} & \textbf{Randomized} \\
\hline
Special setting & UB: 4 (Theorem 2) & UB: 2 (Theorem 3) \\
               & LB: 2 (Theorem 6) & LB: $1.077$ (Theorem 8) \\
General setting & UB: 8 (Theorem 5) & UB: 2 (Theorem 7) \\
                & LB: 2 (Theorem 6) & LB: $1.077$ (Theorem 8) \\
\hline
\end{tabular}
\end{table}
\subsection{Related Work}\label{subsubsec1}
Mechanism design without money for facility location games has been widely studied in recent years. Moulin~\cite{r2} presented a complete characterization of strategyproof mechanisms when agents are located on the line and have single-peaked preferences. Procaccia and Tennenholtz~\cite{r1} were the first to propose approximate mechanism design without money through facility location problems. Fotakis and Tzamos~\cite{r3} investigated deterministic strategyproof mechanisms for two-facility location games on the line and proved a tight lower bound of \(n-2\) for minimizing the social cost. Cheng et al.~\cite{r11} discussed obnoxious facility location games, where each agent hopes that the facility will be built as far away from her as possible. Since then, researchers have studied various facility location problems from different perspectives. In what follows, we focus on reviewing two research settings closely related to our study: (a) agents have optional preferences over facilities, and (b) there are distance constraints between facilities.

\textbf{Facility Location Games with Optional Preferences.}
Serafino and Ventre~\cite{r4} introduced heterogeneous two-facility location games with optional preferences where facilities are built in a set of feasible alternative locations, and each agent has a public location and a private preference on facilities, with her cost being the sum of the distances to her preferred facilities (referred to as sum-variant). This model was later extended by Chen et al.~\cite{r5} from a discrete set to the continuous line, and the cost of each agent is the minimum distance or the maximum distance to the facilities they are interested in (referred to as min-variant or max-variant, respectively). Li et al.~\cite{r6} improved upper bounds of the approximation ratio when the individual cost is min-variant. Kanellopoulos et al.~\cite{r7} improved upper and lower bounds of the approximation ratio when the individual cost is sum-variant.

When agents' locations are private information and preferences on facilities are public information, Zhao et al.~\cite{r8} studied the two-facility location problem on a given set of candidate locations, where the cost of each agent is max-variant. Subsequently, Lofti et al.~\cite{r9} proposed a new strategyproof mechanism that improves upon Zhao et al.~\cite{r8}'s approximation ratio. Kanellopoulos et al.~\cite{r10} considered deterministic strategyproof mechanisms, where the cost of agent is sum-variant.
The facilities mentioned above are all desirable facilities. Kanellopoulos and Voudouris~\cite{r12} studied the two-obnoxious-facility location problem with optional preferences where facilities are built in a set of feasible alternative locations.

\textbf{Facility Location Games with Distance Constraints.}
For the case where there is a maximum distance constraint between two facilities, Zou and Li~\cite{r13} studied the dual preference model where each agent considers a single facility either desirable or obnoxious, but different agents have potentially different opinions. Gai et al.~\cite{r17} designed strategyproof mechanisms for two homogeneous facilities.
For the case where there is a minimum distance constraint between two facilities, Xu et al.~\cite{r18} proposed deterministic strategyproof mechanisms for both desirable and obnoxious facilities. Wu et al.~\cite{r19} studied two homogeneous and heterogeneous obnoxious facilities on a circle. Duan et al.~\cite{r20} and Xu et al.~\cite{r21} studied the two-desirable-facility location problem with fractional preferences and optional preferences, respectively.

Kanellopoulos and Voudouris~\cite{r12}, Xu et al.~\cite{r18}, and Xu et al.~\cite{r21} are the most related to our work among all researches on the heterogeneous facility location problem. However, there are several main differences between our work: (1) we study two obnoxious facility location problem with optional preferences and minimum distance constraint, whereas Kanellopoulos and Voudouris~\cite{r12} only consider each agent has optional preference, and Xu et al.~\cite{r18} only consider the minimum distance constraint; (2) each agent has a private location and a public preference on facilities in our model while the locations is public information and preferences is private information in Xu et al.~\cite{r21}, and Xu et al.~\cite{r21} consider two desirable facilities.

\section{Preliminaries}\label{sec2}
Let $N=\{1,2,\ldots,n\}$ be a set of agents located in an interval $I=[0,1]$ and $F=\{F_1,F_2\}$ be a set of two obnoxious facilities to be located in \([0,1]\). Each agent \(i \in N\) has a \textit{private location} $x_i \in I$ and a public facility preference \(p_i=(p_{i1},p_{i2}) \in \{0,1\}^2\) over the two facilities $F_1$ and $F_2$ with $p_{i1}+p_{i2} \geq 1$: indicating whether $i$ is affected by facility $F_j$ $(p_{ij}=1)$ or not $(p_{ij}=0)$, which is referred to as \textit{optional preferences}. Let $N_j$ ($j \in [2]$) be the set of agents that are affected by facility $F_j$. To be concise, denote an instance as $\mathbf{c}=(\mathbf{x},\mathbf{p})$, where \(\mathbf{x} = (x_{1}, x_{2}, \cdots, x_{n})\) is the \textit{location profile} and \(\mathbf{p} = (p_{1},p_{2},\cdots,p_{n})\) is the \textit{preference profile}.

Suppose that there is a minimum distance constraint $d \in [0,1]$ with respect to the two facilities. That is, the facility location profile $\mathbf{y}= (y_1,y_2)$ ($y_j$ for $F_j$) must satisfy $|y_1-y_2| \geq d$. The social planner selects the locations of the two obnoxious facilities based on the locations reported by the agents and the public preferences.

A deterministic mechanism $f(\mathbf{c})=(y_1,y_2) \in [0, 1]^2$ is a function which maps the reported location profile $\mathbf{x}$ and the public preference profile \(\mathbf{p}\) to a facility location profile $\mathbf{y}$. A randomized mechanism $f$ is a function which maps the reported location profile $\mathbf{x}$ and the public preference profile \(\mathbf{p}\) to a probability distribution over $[0, 1]^2$.

Given a deterministic outcome $f(\mathbf{c})=(y_1,y_2)$, the utility of agent $i$ is defined as the sum of her distances to the facilities that affect her
$$u_i((x_i,p_i),f(\mathbf{c}))=\sum_{j \in [2]} p_{ij} \cdotp |x_i-y_j|.$$

If $f(\mathbf{c})$ is a probability distribution returned by a randomized mechanism $f$, then the utility of agent $i$ is defined as the expected total distance to all facilities that affect her
\[u_i((x_i,p_i),f(\mathbf{c}))=\mathbb{E}_{(y_1,y_2) \sim f(\mathbf{c})}[\sum_{j \in [2]} p_{ij} \cdotp |x_i-y_j|].\]

 The \textit{social utility} of a mechanism \(f\) with respect to an instance \(\mathbf{c}\) is defined as the total utility of all agents $$SU(f|\mathbf{c})=\sum_{i \in N} u_i((x_i,p_i),f(\mathbf{c})).$$

 We aim to design mechanisms so that (a) the agents are incentivized to truthfully report their private locations, and (b) the social utility objective is (approximately) optimized.

A mechanism $f$ is said to be \textit{strategyproof} in two-obnoxious-facility location games if no agent can increase her utility by misreporting her location, regardless of the others’ strategies. Formally, for every \(i \in N\), every location profile \(\mathbf{x} \in I^n\), and every \(x_i' \in I\), it holds that
$$u_i((x_i,p_i),f(\mathbf{x}, \mathbf{p})) \geq u_i((x_i,p_i),f((x_{i}',\mathbf{x}_{-i}),\mathbf{p})),$$
where $\mathbf{x}_{-i}=(x_1, \cdots ,x_{i-1},x_{i+1}, \cdots , x_n)$ is the location profile of \(N \setminus \{i\}\).

A mechanism $f$ is \textit{group strategyproof} in the two-obnoxious-facility location games if for any nonempty subset of agents, at least one of them cannot benefit if they misreport their locations jointly. Formally, for every nonempty set $S \subseteq N$ of agents, every location profile \( \mathbf{x} =( \mathbf{x}_S, \mathbf{x}_{-S}) \in I^n \), and every \( \mathbf{x}'_S \in I^{|S|} \), there exists \( i \in S \), satisfying
\[ u_i((x_i,p_i),f((\mathbf{x}_S, \mathbf{x}_{-S}),\mathbf{p})) \geq u_i((x_i,p_i),f((\mathbf{x}'_S, \mathbf{x}_{-S}),\mathbf{p})). \]

We denote $OPT(\mathbf{c})$ to be the optimal social utility with respect to instance $\mathbf{c}=(\mathbf{x},\mathbf{p})$.
The \textit{approximation ratio} of a mechanism $f$ under the social utility objective is the worst-case ratio (over all possible instances) between the optimal social utility and the (expected) $f$-value, that is
$$\sup_{\mathbf{c}} \frac{SU(OPT|\mathbf{c})}{SU(f|\mathbf{c})}.$$

In this paper, we are interested in deterministic and randomized (group) strategyproof mechanisms with small approximation ratios under the social utility objective.
First, we consider the special setting where \(d=0\), that is, the two facilities can be located at any point within \([0,1]\). Then we study the general setting where \(d \in [0,1]\).

\section{Special setting}\label{sec3}

In this section, we consider the special setting where there is no distance constraint between two facilities (i.e., \(d=0\)). We present a deterministic strategyproof mechanism with approximation ratio of at most 4 and a randomized group strategyproof mechanism with approximation ratio of at most 2, respectively. 
We provide lower bounds of 2 and \(\frac{14}{13}\) on the approximation ratio for any deterministic and randomized strategyproof mechanism, respectively, as detailed in Section 4.

\subsection{Deterministic mechanism}\label{subsec3.1}
In this subsection, we propose a deterministic strategyproof mechanism with approximation ratio of at most 4.\\

\noindent \textbf{Mechanism 1}: Given an instance \(\mathbf{c}=(\mathbf{x},\mathbf{p})\), the facility location profile \(\mathbf{y}=(y_1,y_2)\) is given as follows: if more than $\frac{|N_{1}|}{2}$ agents in $N_{1}$ are located in $[0,1/2]$, output $y_{1} = 1$; otherwise, output $y_{1} = 0.$ If more than $\frac{|N_{2}|}{2}$ agents in $N_{2}$ are located in $[0,1/2]$, output $y_{2} = 1$; otherwise, output $y_{2} = 0.$

\begin{theorem}\label{thm1}
    Mechanism 1 is strategyproof.
\end{theorem}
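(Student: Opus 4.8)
The plan is to show that no agent can increase her utility by misreporting her location. The key observation is that Mechanism 1 determines each facility location $y_j$ independently, based solely on whether a strict majority of agents in $N_j$ lie in $[0,1/2]$, and the output of each $y_j$ is always an extreme point ($0$ or $1$). I would fix an arbitrary agent $i \in N$ with true location $x_i$ and preference $p_i$, and consider a deviation to $x_i'$. Since facilities are placed independently, it suffices to argue that agent $i$ cannot benefit with respect to any single facility $F_j$ that affects her (i.e. with $p_{ij}=1$); the total utility is a sum over such facilities, so if she gains nothing on each coordinate she gains nothing overall.

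For a fixed facility $F_j$ with $j$ such that $p_{ij}=1$, I would split into two cases according to where $x_i$ lies. First I would observe that agent $i$'s report only influences the count of agents in $N_j$ located in $[0,1/2]$, and hence can only flip the output $y_j$ between $0$ and $1$ by changing whether she herself is counted as being in $[0,1/2]$. Suppose $x_i \le 1/2$: then her truthful report contributes to the ``left'' count, and the mechanism tends toward $y_j = 1$, which is the \emph{far} endpoint giving her utility $|x_i - 1| = 1 - x_i \ge 1/2$. Any deviation that keeps $y_j = 1$ leaves her utility unchanged; the only way she could move the facility is by reporting $x_i' > 1/2$, which can only push the count below the majority threshold and yield $y_j = 0$, giving utility $x_i \le 1/2$, which is no better. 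The symmetric argument handles $x_i > 1/2$: truthful reporting tends toward $y_j = 0$ (utility $x_i > 1/2$), and the only deviation that changes the outcome produces $y_j = 1$ (utility $1 - x_i < 1/2$), again no improvement.

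The crux of the argument, and the step I would be most careful about, is verifying that a single agent's deviation can only move the facility \emph{away} from the better endpoint for her, never toward it. This follows from the monotone structure: being truthful always places agent $i$ on the side of the threshold that biases $y_j$ toward her preferred extreme. Concretely, when $x_i \le 1/2$ the facility she wants is $y_j=1$, and truthful reporting is exactly what adds to the majority count favoring $y_j=1$; deviating can only remove her contribution and risk flipping $y_j$ to $0$, the endpoint she disfavors. I would make this precise by comparing the count of left-located agents in $N_j$ under truthful and misreported profiles, noting that the counts differ by at most one (agent $i$'s own membership), and checking that in every boundary case the resulting change in $y_j$ is never advantageous to $i$. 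Once this per-facility monotonicity is established, strategyproofness for the sum-variant utility follows immediately by summing over the at most two facilities that affect agent $i$.
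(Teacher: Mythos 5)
Your proof is correct, and it is organized differently from the paper's. The paper proves strategyproofness by enumerating the four possible output profiles $(1,1)$, $(0,0)$, $(1,0)$, $(0,1)$ and, within each, checking every agent type ($N_1\setminus N_2$, $N_2\setminus N_1$, $N_1\cap N_2$) and location half separately; for instance, in the case $(1,0)$ it explicitly computes that an $N_1\cap N_2$ agent in $[0,\tfrac12]$ who deviates can only induce $(1,0)$ or $(0,0)$, neither of which improves her utility $(1-x_i)+x_i=1$. You instead exploit the fact that Mechanism 1 places each facility by an independent majority rule and that utilities are additive, reducing the problem to a single-facility claim: for each $F_j$ with $p_{ij}=1$, a deviation can never move $y_j$ toward the agent's favored endpoint. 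The one point needing care in your write-up is the phrasing suggesting the truthful outcome always matches the agent's bias (e.g., an agent at $x_i\le \tfrac12$ may truthfully face $y_j=0$ if the others form a right-side majority); but your third paragraph supplies exactly the right fix --- the counts under truth and deviation differ by at most the agent's own membership, and the predicate ``left count $> |N_j|/2$'' can only flip in the direction adverse to her --- so the argument closes. Your per-facility decomposition buys modularity and brevity (two location cases instead of roughly a dozen case/type combinations) and would generalize directly to any number of facilities placed by independent majority rules with additive utilities; the paper's enumeration is more concrete and verifies the same monotonicity facts, just bundled by joint output rather than by facility.
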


Due to space constraints, the proof of Theorem 1 can be found in the appendix.\footnote{Due to space constraints, all missing proofs can be found in the appendix.}

\begin{lemma}
    In the special setting, for any instance \(\mathbf{c}\), the optimal social utility cannot exceed \(n+|N_1 \cap N_2|\).
\end{lemma}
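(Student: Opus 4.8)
The plan is to bound the utility of each agent from above, then sum over all agents. Recall that the utility of agent $i$ is $u_i = \sum_{j \in [2]} p_{ij} \cdot |x_i - y_j|$, where each $x_i, y_j \in [0,1]$. The key observation is that for any single facility $F_j$ affecting agent $i$, the distance $|x_i - y_j|$ is at most $1$, since both points lie in the interval $[0,1]$. The hard part is organizing the per-agent contributions so that the combined count $n + |N_1 \cap N_2|$ emerges naturally rather than a looser bound like $2n$.

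First I would partition the agent set $N$ according to preference type. Let $A$ denote the agents affected by exactly one facility (those with $p_{i1} + p_{i2} = 1$), and let $B = N_1 \cap N_2$ denote the agents affected by both facilities (those with $p_{i1} = p_{i2} = 1$). These two sets are disjoint and their union is $N$, since every agent satisfies $p_{i1} + p_{i2} \geq 1$. For an agent $i \in A$, exactly one term in the utility sum is nonzero, so $u_i = |x_i - y_j| \leq 1$ for the single relevant facility $F_j$. For an agent $i \in B$, both terms are present, so $u_i = |x_i - y_1| + |x_i - y_2| \leq 1 + 1 = 2$.

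Summing these per-agent bounds gives
\begin{align*}
SU(f \mid \mathbf{c}) = \sum_{i \in A} u_i + \sum_{i \in B} u_i \leq |A| \cdot 1 + |B| \cdot 2 = |A| + 2|B|.
\end{align*}
Since $|A| + |B| = n$ and $|B| = |N_1 \cap N_2|$, we have $|A| = n - |N_1 \cap N_2|$, whence $|A| + 2|B| = (n - |N_1 \cap N_2|) + 2|N_1 \cap N_2| = n + |N_1 \cap N_2|$. This bound holds for the utility under any feasible facility placement, in particular the optimal one, so $OPT(\mathbf{c}) \leq n + |N_1 \cap N_2|$, which is exactly the claim.

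I expect no genuine obstacle here; the argument is a routine per-agent distance bound combined with the disjoint partition of $N$ by preference type. The only point requiring slight care is confirming that the partition is exhaustive and disjoint — which follows immediately from the constraint $p_{i1} + p_{i2} \geq 1$ together with $p_{ij} \in \{0,1\}$ — and noting that the bound $|x_i - y_j| \leq 1$ is valid precisely because all locations are confined to $[0,1]$.
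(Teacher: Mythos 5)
Your proof is correct and follows essentially the same argument as the paper: bound each agent's utility by $1$ if she is affected by one facility and by $2$ if affected by both (your set $A$ is the paper's $(N_1 \setminus N_2) \cup (N_2 \setminus N_1)$ and your $B$ is $N_1 \cap N_2$), then sum to get $n + |N_1 \cap N_2|$. No differences worth noting.
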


\begin{theorem}\label{thm2}
    The approximation ratio of Mechanism 1 is at most 4.
\end{theorem}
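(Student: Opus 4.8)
The plan is to lower-bound the social utility of Mechanism 1 on an arbitrary instance and then combine this with the upper bound on $OPT(\mathbf{c})$ supplied by Lemma 2. Since the utility is additive across facilities, I would first write
\[
SU(f\mid\mathbf{c})=U_1+U_2,\qquad U_j=\sum_{i\in N_j}|x_i-y_j|,
\]
and analyze each $U_j$ separately, because Mechanism 1 decides $y_1$ and $y_2$ independently using the corresponding majorities in $N_1$ and $N_2$.

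The core estimate is the claim that $U_j\ge |N_j|/4$ for each $j\in[2]$, which I would prove by a short case analysis on the rule defining $y_j$. If more than $|N_j|/2$ agents of $N_j$ lie in $[0,1/2]$, the mechanism sets $y_j=1$; each such agent is then at distance at least $1/2$ from $y_j$, so $U_j \ge \tfrac12\cdot\tfrac{|N_j|}{2}=\tfrac{|N_j|}{4}$. In the complementary case $y_j=0$, at most $|N_j|/2$ agents lie in $[0,1/2]$, hence at least $|N_j|/2$ agents lie in $(1/2,1]$; each of these is at distance $x_i>1/2$ from $y_j=0$, giving again $U_j \ge \tfrac12\cdot\tfrac{|N_j|}{2}=\tfrac{|N_j|}{4}$. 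The degenerate case $N_j=\varnothing$ is immediate.

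Summing the two bounds yields $SU(f\mid\mathbf{c})\ge (|N_1|+|N_2|)/4$. The key observation that makes this match Lemma 2 exactly is the set identity $|N_1|+|N_2|=|N_1\cup N_2|+|N_1\cap N_2|=n+|N_1\cap N_2|$, where $|N_1\cup N_2|=n$ follows from the constraint $p_{i1}+p_{i2}\ge1$ forcing every agent into $N_1\cup N_2$. Therefore $SU(f\mid\mathbf{c})\ge (n+|N_1\cap N_2|)/4$, and invoking Lemma 2's bound $OPT(\mathbf{c})\le n+|N_1\cap N_2|$ gives
\[
\frac{OPT(\mathbf{c})}{SU(f\mid\mathbf{c})}\le\frac{n+|N_1\cap N_2|}{(n+|N_1\cap N_2|)/4}=4,
\]
as required.

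The proof is otherwise routine; the main point to handle with care is the boundary behavior at $1/2$ in the per-facility bound, namely ensuring that the strict/non-strict majority conditions are applied consistently so that the factor $\tfrac12\cdot\tfrac12=\tfrac14$ per facility genuinely holds in both branches. The only conceptual step is recognizing that the mechanism's guarantee $(|N_1|+|N_2|)/4$ should be rewritten via the inclusion–exclusion identity so as to line up precisely with the quantity $n+|N_1\cap N_2|$ bounded in Lemma 2.
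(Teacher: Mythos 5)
Your proof is correct and follows essentially the same argument as the paper: the paper enumerates the four possible outputs $(y_1,y_2)\in\{0,1\}^2$ and applies exactly your per-facility estimate $\sum_{i\in N_j}|x_i-y_j|\ge |N_j|/4$ in each case, then uses the same identity $|N_1|+|N_2|=n+|N_1\cap N_2|$ together with the bound $OPT(\mathbf{c})\le n+|N_1\cap N_2|$. The only nitpick is that this bound is the paper's Lemma~1 (the special-setting version), not Lemma~2, though the latter specialized to $d=0$ gives the identical inequality.
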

\begin{proof}
  Let \(f\) denote Mechanism 1. For any instance \(\mathbf{c}=(\mathbf{x},\mathbf{p})\), we now discuss four cases with respect to $f(\mathbf{c})= (y_{1},y_{2})$.

Case 1. $(y_{1},y_{2}) = (1,1)$. In this case, there are more than $\frac{|N_{1}|}{2}$ $N_{1}$-agents and more than $\frac{|N_{2}|}{2}$ $N_{2}$-agents located in $[0,1/2]$. Thus, the social utility of Mechanism 1 is
\begin{align*}
SU\left( (1,1) \mid \mathbf{c} \right) &= \sum_{i \in N} u_{i}((x_i,p_i),f(\mathbf{c})) = \sum_{i \in N_{1}} (1 - x_{i}) + \sum_{i \in N_{2}} (1 - x_{i}) \\
&\geq \frac{|N_{1}|}{2} \cdot \left( 1 - \frac{1}{2} \right) + \frac{|N_{2}|}{2} \cdot \left( 1 - \frac{1}{2} \right) \\
&= \frac{|N_{1}| + |N_{2}|}{4}  \\
&=\frac{1}{4}(n+|N_1 \cap N_2|).
\end{align*}

Combining Lemma 1, we have $\frac{SU(OPT|\mathbf{c})}{SU ( f \mid \mathbf{c})} \leq 4$.

Case 2. $(y_{1},y_{2}) = (0,0)$. This case is similar to Case 1.

Case 3. $(y_{1},y_{2}) = (1,0)$. In this case, there are more than $\frac{|N_{1}|}{2}$ $N_{1}$-agents located in \([0,\frac{1}{2}]\) and more than $\frac{|N_{2}|}{2}$ $N_{2}$-agents located in \([\frac{1}{2},1]\). Thus, the social utility of Mechanism 1 is

\begin{align*}
SU\left( (1,0) \mid \mathbf{c} \right) &= \sum_{i \in N} u_{i}((x_i,p_i),f(\mathbf{c})) = \sum_{i \in N_{1}} (1 - x_{i}) + \sum_{i \in N_{2}}  x_{i} \\
&\geq \frac{|N_{1}|}{2} \cdot \left( 1 - \frac{1}{2} \right) + \frac{|N_{2}|}{2} \cdot \frac{1}{2} \\
&= \frac{|N_{1}| + |N_{2}|}{4}  \\
&=\frac{1}{4}(n+|N_1 \cap N_2|).
\end{align*}

Combining Lemma 1, we have $\frac{SU(OPT|\mathbf{c})}{SU\left( f \mid \mathbf{c} \right)} \leq 4$.

Case 4. $(y_{1},y_{2}) = (0,1)$. This case is similar to Case 3.

Therefore, the approximation ratio of Mechanism 1 is at most 4.
\qed
\end{proof}

\subsection{Randomized mechanism}\label{subsec3.2}
In this subsection, we propose a randomized group strategyproof mechanism with approximation ratio of at most 2.\\

\noindent \textbf{Mechanism 2}.\quad Given an instance \(\mathbf{c}=(\mathbf{x},\mathbf{p})\), return the facility location profile $(0,0)$ ,$(0,1), (1,0), $ and $ (1,1)$ with probability of $\frac{1}{4}$, respectively. 

\begin{theorem}\label{thm3}
    Mechanism 2 is a group strategyproof mechanism with approximation ratio of at most \( 2\).
\end{theorem}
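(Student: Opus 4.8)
The plan is to handle the two claims separately, beginning with group strategyproofness, where the key observation is that Mechanism 2 ignores the reported location profile $\mathbf{x}$ entirely. Regardless of what the agents report, it always returns the same distribution, placing $(y_1,y_2)$ at each of the four corners of $[0,1]^2$ with probability $\frac{1}{4}$. Consequently, for any nonempty coalition $S$ and any joint deviation $\mathbf{x}'_S$, every agent faces exactly the same output distribution before and after the deviation, so no agent's expected utility changes. This immediately yields group strategyproofness (and hence strategyproofness) without any case analysis.

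For the approximation ratio, I would first compute the expected utility of a single agent. Under the uniform distribution over the four corners, the marginal distribution of each facility $F_j$ places it at $0$ with probability $\frac{1}{2}$ and at $1$ with probability $\frac{1}{2}$. Hence for any location $x_i \in [0,1]$,
\[
\mathbb{E}[|x_i - y_j|] = \tfrac{1}{2}x_i + \tfrac{1}{2}(1-x_i) = \tfrac{1}{2},
\]
independently of $x_i$. Therefore the expected utility of agent $i$ equals $\frac{1}{2}(p_{i1}+p_{i2})$, and summing over all agents gives
\[
SU(f \mid \mathbf{c}) = \tfrac{1}{2}\sum_{i \in N}(p_{i1}+p_{i2}) = \tfrac{1}{2}(|N_1| + |N_2|) = \tfrac{1}{2}(n + |N_1 \cap N_2|),
\]
where the last equality uses $N_1 \cup N_2 = N$, so that $|N_1|+|N_2| = n + |N_1 \cap N_2|$.

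Combining this exact value with the upper bound $SU(OPT \mid \mathbf{c}) \le n + |N_1 \cap N_2|$ from Lemma 1 yields $\frac{SU(OPT \mid \mathbf{c})}{SU(f \mid \mathbf{c})} \le 2$ for every instance, establishing the claimed ratio. The main conceptual point is recognizing that corner-randomization makes each facility's expected distance to every agent exactly $\frac{1}{2}$, which simultaneously trivializes the incentive analysis and matches the social-utility bound of Lemma 1 up to a factor of two; there is no genuine obstacle beyond this observation, as the remaining argument is a short direct calculation rather than a case analysis.
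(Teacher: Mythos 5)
Your proposal is correct and takes essentially the same route as the paper: both argue group strategyproofness from the fact that the mechanism ignores the reported locations, compute $SU(f\mid\mathbf{c})=\tfrac{1}{2}(n+|N_1\cap N_2|)$ exactly (the paper by summing the four corner outcomes explicitly, you via the per-facility marginal giving expected distance $\tfrac{1}{2}$ to every agent), and conclude by combining with Lemma~1.
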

\begin{proof}
    Mechanism 2 is group strategyproof since it does not depend on the agents' reported locations.

Let \(f\) denote Mechanism 2. For any instance \(\mathbf{c}=(\mathbf{x},\mathbf{p})\), the social utility of Mechanism 2 is
\begin{align*}
    SU(f|\mathbf{c})&=\frac{1}{4} (\sum_{i \in N_1} (x_i+x_i+(1-x_i)+(1-x_i))+\sum_{i \in N_2} (x_i+(1-x_i)+x_i+(1-x_i)))\\
    &=\frac{1}{4} (\sum_{i \in N_1} 2+\sum_{i \in N_2}2)\\
    &=\frac{1}{2}(|N_1|+|N_2|)\\
    &=\frac{1}{2}(n+|N_1 \cap N_2|).
\end{align*}
Combining Lemma 1, the approximation ratio of Mechanism 2 is at most \( 2\).
\qed
\end{proof}

\section{General setting}\label{sec4}
In this section, we study the two-obnoxious-facility location problem with optional preferences and minimum distance constraint \(d \in [0,1]\) between the two facilities.
\subsection{Deterministic Mechanisms}\label{subsec4.1}
In the following, we first present a deterministic strategyproof mechanism with approximation ratio of at most 8. Furthermore, we provide a lower bound of 2 on the approximation ratio for any deterministic strategyproof mechanism.\\

\noindent \textbf{Mechanism 3}.\quad Given an instance \(\mathbf{c}=(\mathbf{x},\mathbf{p})\), let \(j^*=argmax_{j \in [2]} |N_j \setminus N_{3-j}|.\) The two facilities are located as follows:
\begin{itemize}
    \item \textbf{Case 1.} If \( |N_1 \cap N_2|\geq|N_{j^*} \setminus N_{3-j^*}| \), let $n_L$ and $n_R$ denote the number of agents in $N_1 \cap N_2$ located in $[0,\frac{1}{2}]$ and $(\frac{1}{2},1]$, respectively. If $n_L \geq n_R$, output $(y_{j^*},y_{3-j^*})=(1,1-d)$; otherwise, output $(y_{j^*},y_{3-j^*})=(0,d).$
    \item \textbf{Case 2.} If \(|N_{j^*} \setminus N_{3-j^*}|>|N_1 \cap N_2|\) , let $n_L'$ and $n_R'$ denote the number of agents in $N_{j^*}\setminus N_{3-j^*} $ located in $[0,\frac{1}{2}]$ and $(\frac{1}{2},1]$, respectively. If $n_L' \geq n_R'$, output $(y_{j^*},y_{3-j^*})=(1,0)$; otherwise, output $(y_{j^*},y_{3-j^*})=(0,1).$
\end{itemize}

\begin{theorem}\label{thm4}
    Mechanism 3 is strategyproof.
\end{theorem}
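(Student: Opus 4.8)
The plan is to show that no agent can benefit from misreporting by analyzing how a deviation can affect the mechanism's output. The key observation is that Mechanism 3 is determined entirely by three pieces of data: the cardinalities $|N_1 \cap N_2|$ and $|N_{j^*}\setminus N_{3-j^*}|$, and the comparison of how many relevant agents lie in $[0,\tfrac12]$ versus $(\tfrac12,1]$. Since the preference profile $\mathbf{p}$ is public and fixed, the sets $N_1$, $N_2$, $N_1\cap N_2$, and $N_j\setminus N_{3-j}$, hence $j^*$ and the case distinction (Case~1 versus Case~2), are all invariant under any single agent's location misreport. Therefore the only way a deviation can alter the outcome is by flipping the tie-break comparison ($n_L\ge n_R$ vs. $n_L<n_R$ in Case~1, or $n_L'\ge n_R'$ vs. $n_L'<n_R'$ in Case~2) that decides between the two candidate location profiles. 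This reduces the whole argument to analyzing, in each case, a single binary decision that places the facilities at one of two fixed pairs of endpoints.

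First I would fix which case the instance falls into (this is misreport-invariant) and identify the relevant counting set: $N_1\cap N_2$ in Case~1, and $N_{j^*}\setminus N_{3-j^*}$ in Case~2. Then I would split agents into two groups according to whether their misreport can influence the tie-break count. An agent whose preference makes her irrelevant to the counting set cannot change the output at all, so her utility is fixed and she has no incentive to lie. For an agent $i$ who does belong to the counting set, I would argue that her own deviation can only move the output from her currently-favored configuration to her disfavored one, or leave it unchanged. The crucial structural fact is that in each case the two candidate outputs are symmetric reflections about $\tfrac12$: in Case~1 they are $(1,1-d)$ and $(0,d)$, and in Case~2 they are $(1,0)$ and $(0,1)$. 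An agent located in $[0,\tfrac12]$ strictly prefers (or weakly prefers) the output that pushes facilities toward $1$, while an agent located in $(\tfrac12,1]$ prefers the reflected output; and this is exactly the output the tie-break selects when she reports truthfully.

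The core argument is then a monotonicity/single-crossing check. For a counting-set agent $i$ located in $[0,\tfrac12]$, her truthful report contributes to the $n_L$ (or $n_L'$) tally, which can only help select the ``push right'' output that she weakly prefers; by misreporting she could at best move herself into the $n_R$ tally, which can only flip the decision toward her less-preferred ``push left'' output, never improving her utility. The symmetric statement holds for agents in $(\tfrac12,1]$. I would verify the utility inequality explicitly for the two configurations: for a $(1,1-d)$ versus $(0,d)$ comparison, an agent in $N_1\cap N_2$ at position $x_i\le\tfrac12$ has utility $(1-x_i)+|x_i-(1-d)|$ under the first and $x_i+|x_i-d|$ under the second, and one checks the first is at least the second; the Case~2 comparison $(1,0)$ versus $(0,1)$ is the standard single-facility endpoint computation. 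The main obstacle will be handling the $j^*$ bookkeeping cleanly, namely tracking which facility coordinate an agent actually feels (agents in the counting set of Case~2 feel only $F_{j^*}$, while those in Case~1 feel both), and ensuring the tie-breaking convention ($\ge$ versus $>$) does not create a knife-edge instance where a deviation achieves a strict gain. I would resolve this by checking that at the tie boundary the two candidate outputs give a relevant agent the same utility, so the tie-break choice is irrelevant to her and truthfulness is preserved.
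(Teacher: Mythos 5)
Your proof is correct and follows essentially the same route as the paper's: the case split depends only on public preferences, agents outside the relevant counting set ($N_1 \cap N_2$ in Case 1, $N_{j^*} \setminus N_{3-j^*}$ in Case 2) cannot alter the output, and a counting-set agent weakly prefers the candidate output that her truthful report supports, verified by the same endpoint utility inequalities. One minor correction: your proposed resolution of the tie-breaking concern --- that at the tie boundary the two candidate outputs give a relevant agent the same utility --- is false in general, but it is also unnecessary, since your monotonicity argument already covers ties (a misreport can only flip the decision toward the deviator's weakly worse configuration, never toward her better one).
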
 
\begin{proof}
For any instance \(\mathbf{c}=(\mathbf{x},\mathbf{p})\), without loss of generality, assume that \(j^*=1\). Note that the two cases in Mechanism 3 depend only on agent's preferences, which are public information. Thus, $N_1, N_2$ are public. Hence, no agent can induce the mechanism to shift from (Case 1) to (Case 2) and vice versa by misreporting. We will discuss each case separately.

Case 1. \(|N_1 \cap N_2|\geq|N_1 \setminus N_2|\). Without loss of generality, assume that $(y_1,y_2)=(1,1-d)$ is the facility location profile that Mechanism 3 outputs. Since all \( N_1 \cap N_2\)-agents located in \( [\frac{1}{2},1]\) and \( (N_1 \backslash N_2) \cup (N_2 \backslash N_1)\)-agents cannot change the locations of \(F_1\) and \(F_2\) by misreporting, they have no incentive to misreport. Consider agent \(i \in N_1 \cap N_2\) in \([0,\frac{1}{2}] \). For \(d \in [0,\frac{1}{2}]\),
    \[ u_i((x_i,p_i),(0,d))=x_i+|d-x_i|  \leq (1-x_i)+(1-d-x_i)= u_i((x_i,p_i),(1,1-d)), \]
    and for \(d \in (\frac{1}{2},1]\),
    \[ u_i((x_i,p_i),(0,d))=x_i+(d-x_i)  \leq (1-x_i)+|1-d-x_i|
    =u_i((x_i,p_i),(1,1-d)). \]
    
    Thus, any agent \(i \in N_1 \cap N_2\) located in \( [0,\frac{1}{2}] \) has no incentive to misreport.

Case 2. \(|N_1 \setminus N_2| >|N_1 \cap N_2|\). Without loss of generality, assume that $(y_1,y_2)=(1,0)$ is the facility location profile that Mechanism 3 outputs. Consider agent \(i \in N_1 \cap N_2\), we have
    \[u_i((x_i,p_i),(0,1))=x_i+(1-x_i) = u_i((x_i,p_i),(1,0)),\]
    thus all \(N_1 \cap N_2\)-agents have no incentive to misreport.

Agents in \(N_2 \setminus N_1\) cannot affect the output of Mechanism 3 in this case, and have no incentive to misreport.
    
    The utility of agent \(i \in N_1 \backslash N_2\) located in \( [0,\frac{1}{2}] \) is
    \[u_i((x_i,p_i),(0,1))=x_i  \leq 1-x_i=u_i((x_i,p_i),(1,0)),\]
    so she has no incentive to misreport. 
    For any agent \(i \in N_1 \backslash N_2\) located in \( (\frac{1}{2},1] \), misreporting can only possibly increase \(n_L'\), and have no way to change the output.
\qed
\end{proof}

\begin{lemma}
    In the general setting, for any instance \(\mathbf{c}=(\mathbf{x},\mathbf{p})\), the optimal social utility cannot exceed \(n+(1-d) \cdot |N_1 \cap N_2|\).
\end{lemma}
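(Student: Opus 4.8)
The plan is to bound the optimal social utility by bounding each agent's individual utility, separating agents according to their preference type. Recall that the social utility decomposes as $SU(f|\mathbf{c})=\sum_{i\in N_1\setminus N_2}|x_i-y_1|+\sum_{i\in N_2\setminus N_1}|x_i-y_2|+\sum_{i\in N_1\cap N_2}(|x_i-y_1|+|x_i-y_2|)$, and the first two sums involve single-facility agents while the third involves agents affected by both facilities. The idea is to bound the contribution of each group and exploit the minimum distance constraint $|y_1-y_2|\ge d$ to sharpen the bound on the $N_1\cap N_2$ agents, which is precisely where the $(1-d)$ savings come from.

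First I would observe that any single-facility agent $i$ (i.e. $i\in N_j\setminus N_{3-j}$) has utility $|x_i-y_j|\le 1$, since both $x_i$ and $y_j$ lie in $[0,1]$. Summing over all single-facility agents gives a contribution of at most $|N_1\setminus N_2|+|N_2\setminus N_1|$. The main work is bounding the utility of an agent $i\in N_1\cap N_2$, whose utility is $|x_i-y_1|+|x_i-y_2|$. The key geometric fact is that since the two facilities must be at distance at least $d$ apart, an agent cannot be simultaneously far from both: with $y_1,y_2\in[0,1]$ and $|y_1-y_2|\ge d$, the maximum of $|x_i-y_1|+|x_i-y_2|$ over $x_i\in[0,1]$ is at most $2-d$. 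I would verify this by noting that the sum $|x_i-y_1|+|x_i-y_2|$ is a convex function of $x_i$, hence maximized at an endpoint; assuming WLOG $y_1\le y_2$ so $y_2-y_1\ge d$, at $x_i=0$ the sum is $y_1+y_2$ and at $x_i=1$ it is $(1-y_1)+(1-y_2)=2-y_1-y_2$, and whichever endpoint is attained, combined with $y_2-y_1\ge d$ one checks the value never exceeds $2-d$. (The extreme case $y_1=0,y_2=1$ gives exactly $2-d$ only when $d$ is small; the cleanest uniform bound is $2-d$.)

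Putting the two pieces together, the optimum is bounded by
\begin{align*}
OPT(\mathbf{c})&\le |N_1\setminus N_2|+|N_2\setminus N_1|+(2-d)\,|N_1\cap N_2|\\
&=\bigl(|N_1\setminus N_2|+|N_2\setminus N_1|+|N_1\cap N_2|\bigr)+(1-d)\,|N_1\cap N_2|\\
&=n+(1-d)\,|N_1\cap N_2|,
\end{align*}
using that $N=(N_1\setminus N_2)\cup(N_2\setminus N_1)\cup(N_1\cap N_2)$ partitions the agents (every agent has $p_{i1}+p_{i2}\ge 1$), so the three cardinalities sum to $n$. This matches the claimed bound.

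The main obstacle I anticipate is verifying the sharp inequality $|x_i-y_1|+|x_i-y_2|\le 2-d$ rigorously for the $N_1\cap N_2$ agents, since the bound must hold for \emph{every} feasible facility placement simultaneously (the optimum chooses $y_1,y_2$ to maximize utility, so I need the bound uniformly over all $(y_1,y_2)$ with $|y_1-y_2|\ge d$). This requires being careful that $2-d$ is genuinely an upper bound for all such placements and not just for a specific one; the convexity-and-endpoints argument handles this, but one must confirm that pushing the facilities to opposite corners $\{0,1\}$ (which maximizes the single agent's sum) is consistent with the distance constraint and yields at most $2-d$. The single-facility agents are routine since their utility is trivially at most $1$.
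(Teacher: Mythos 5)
Your proof is correct and takes essentially the same route as the paper's: split the agents into single-facility agents (utility at most $1$) and $N_1\cap N_2$ agents (utility at most $2-d$ under the constraint $|y_1-y_2|\ge d$), then sum the three cardinalities to get $n+(1-d)\,|N_1\cap N_2|$. You in fact supply the convexity-and-endpoints verification of the key bound $|x_i-y_1|+|x_i-y_2|\le 2-d$ that the paper merely asserts; the only blemish is your parenthetical about $y_1=0,\,y_2=1$ (that placement gives utility exactly $1$ for every $x_i$, and the bound $2-d$ is actually attained e.g.\ at $y_1=1-d,\,y_2=1,\,x_i=0$), but this aside plays no role in the argument.
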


\begin{proposition}\label{prop1}
    For instances with \( |N_1 \cap N_2|\geq|N_{j^*} \setminus N_{3-j^*}| \), the worst-case ratio between the optimal social utility and the social utility of Mechanism 3 is at most $2(4-d)$ for $d \in [0,\frac{1}{2}]$ and at most $\frac{4-d}{d}$ for $d \in (\frac{1}{2},1]$.
\end{proposition}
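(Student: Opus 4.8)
The plan is to lower-bound $SU(f\mid\mathbf{c})$ and combine it with Lemma~2. By the choice of $j^*$ we may assume $j^*=1$, and since the two outputs of Case~1 are mirror images under the reflection $x\mapsto 1-x$ (which preserves all distances, hence both $SU(f\mid\mathbf{c})$ and $OPT(\mathbf{c})$), it suffices to treat the branch $n_L\ge n_R$, i.e.\ $(y_1,y_2)=(1,1-d)$. Write $m=|N_1\cap N_2|$, $a=|N_1\setminus N_2|$, $b=|N_2\setminus N_1|$; the Case~1 hypothesis $m\ge|N_{j^*}\setminus N_{3-j^*}|$ gives $a\le m$ and $b\le m$, so Lemma~2 yields $OPT(\mathbf{c})\le n+(1-d)m=(2-d)m+a+b\le(4-d)m$. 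We may assume $m\ge 1$, since otherwise there are no agents at all.

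The heart of the argument is a sharp lower bound on the mechanism's social utility, which I will extract entirely from the $N_1\cap N_2$-agents (the $N_1\setminus N_2$- and $N_2\setminus N_1$-agents may sit exactly on a facility and contribute nothing). For such an agent at $x_i$ the utility is $u_i=(1-x_i)+|x_i-(1-d)|$, and the key computational step is to minimize this over the relevant sub-interval, splitting on whether $x_i$ lies below or above $1-d$. For $d\in[0,\tfrac12]$ one finds that every agent in $[0,\tfrac12]$ contributes at least $1-d$ and every agent in $(\tfrac12,1]$ contributes at least $d$. The subtle point --- and the step I expect to be the main obstacle --- is that bounding via the $n_L$ left agents alone only gives $SU(f\mid\mathbf{c})\ge\tfrac{m}{2}(1-d)$, hence the too-weak ratio $\tfrac{2(4-d)}{1-d}$; one must also charge the $n_R$ right agents. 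Combining both, $SU(f\mid\mathbf{c})\ge n_L(1-d)+n_R d=n_L(1-2d)+md$, which since $1-2d\ge 0$ is nondecreasing in $n_L$ and is therefore minimized at the smallest admissible value $n_L=\tfrac{m}{2}$ (allowed because $n_L\ge n_R$). This yields $SU(f\mid\mathbf{c})\ge\tfrac{m}{2}$, and together with $OPT(\mathbf{c})\le(4-d)m$ it gives the ratio $2(4-d)$.

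For $d\in(\tfrac12,1]$ the same per-agent computation is even cleaner: since now $1-d<\tfrac12$, one checks that every $N_1\cap N_2$-agent --- whether in $[0,\tfrac12]$ or in $(\tfrac12,1]$ --- has utility at least $d$ (the right agents in fact have utility exactly $d$). Hence $SU(f\mid\mathbf{c})\ge md$, and with $OPT(\mathbf{c})\le(4-d)m$ the ratio is at most $\tfrac{4-d}{d}$, completing both regimes. The only remaining care is to confirm the reflected branch $(0,d)$ via the symmetry noted above and to verify that the per-agent minima are attained inside the stated intervals; both are routine once the case split on $x_i$ versus $1-d$ is in place.
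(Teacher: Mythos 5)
Your proof is correct and follows essentially the same route as the paper's: lower-bound the mechanism's utility using only the $N_1\cap N_2$-agents (left agents contribute at least $1-d$, right agents at least $d$, giving $\tfrac{m}{2}$ when $d\le\tfrac12$ via $n_L\ge n_R$, and $md$ when $d>\tfrac12$), then combine with Lemma~2 and the case hypothesis, which gives $n\le 3m$, i.e.\ $OPT\le(4-d)m$. You merely spell out details the paper leaves implicit (the reflection symmetry behind the WLOG and the algebra showing $n_L(1-d)+n_Rd\ge\tfrac{m}{2}$), so there is nothing substantively different to report.
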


\begin{proof}
Without loss of generality, assume that \(j^*=1\), and $\mathbf{y}=(1,1-d)$ is the facility location profile that Mechanism 3 outputs.

    (1) If $d \in [0,\frac{1}{2}]$, the social utility of Mechanism 3 is
\begin{align*}
    SU(\mathbf{y}|\mathbf{c})&=\sum_{i \in N} u_i((x_i,p_i),\mathbf{y})\\
    &=\sum_{i \in N_1 \cap N_2}(|1-x_i|+|1-d-x_i|)+\sum_{i \in N_1 \backslash N_2}|1-x_i|+\sum_{i \in N_2 \backslash N_1}|1-d-x_i|\\
    &\geq \sum_{i \in N_1 \cap N_2, x_i\in[0,\frac{1}{2}]}(1-d)+\sum_{i \in N_1 \cap N_2, x_i\in(\frac{1}{2},1]}d\\
    &\geq \frac{|N_1 \cap N_2|}{2}.
\end{align*}

 Note that $|N_1 \cap N_2| \geq \frac{n}{3}$, since $|N_1 \cap N_2|+|N_1 \backslash N_2|+|N_2 \backslash N_1|=n$, $|N_1| \geq |N_2|$, and \( |N_1 \cap N_2|\geq|N_1 \setminus N_2| \).

Combining Lemma 2, we have 
\[\frac{SU(OPT|\mathbf{c})}{SU(\mathbf{y}|\mathbf{c})} \leq 2 \cdot \frac{n+(1-d) \cdot |N_1 \cap N_2|}{|N_1 \cap N_2|} \leq 2(4-d) \in [7,8].\]

(2) If \(d \in (\frac{1}{2},1]\),  the social utility of Mechanism 3 is
\begin{align*}
    SU(\mathbf{y}|\mathbf{c})&=\sum_{i \in N} u_i((x_i,p_i),\mathbf{y})\\
    &=\sum_{i \in N_1 \cap N_2}(|1-x_i|+|1-d-x_i|)+\sum_{i \in N_1 \backslash N_2}|1-x_i|+\sum_{i \in N_2 \backslash N_1}|1-d-x_i|\\
    &\geq d \cdot |N_1 \cap N_2|.
\end{align*}

Combining Lemma 2, we have 
\[\frac{SU(OPT|\mathbf{c})}{SU((\mathbf{y}|\mathbf{c})} \leq \frac{n+(1-d) \cdot |N_1 \cap N_2|}{d \cdot |N_1 \cap N_2|} \leq \frac{4-d}{d} \in [3,7).\]
\qed
\end{proof}

\begin{proposition}\label{prop2}
    For instances with \( |N_{j^*} \setminus N_{3-j^*}| \geq|N_1 \cap N_2|\), the worst-case ratio between the optimal social utility and the social utility of Mechanism 3 is at most $8$. 
\end{proposition}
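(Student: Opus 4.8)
The plan is to treat the regime in which Mechanism 3 operates under its Case 2, namely $|N_{j^*}\setminus N_{3-j^*}| > |N_1\cap N_2|$; the boundary case of equality is already subsumed by Proposition 1 (whose hypothesis is the non-strict inequality), so nothing new is needed there. Without loss of generality I would assume $j^*=1$, so that the defining maximality of $j^*$ gives $|N_1\setminus N_2|\ge|N_2\setminus N_1|$, while the Case 2 hypothesis gives $|N_1\setminus N_2|>|N_1\cap N_2|$. Writing $a=|N_1\cap N_2|$, $b=|N_1\setminus N_2|$, $c=|N_2\setminus N_1|$, these read $b\ge c$ and $b>a$. By reflecting $[0,1]$ it suffices to analyze the output $\mathbf{y}=(y_1,y_2)=(1,0)$, the output $(0,1)$ being symmetric.

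First I would write the social utility of Mechanism 3 exactly. At $(1,0)$ each agent in $N_1\cap N_2$ contributes $|1-x_i|+|0-x_i|=1$, each agent in $N_1\setminus N_2$ contributes $1-x_i$, and each agent in $N_2\setminus N_1$ contributes $x_i$. The mechanism returns $(1,0)$ precisely when $n_L'\ge n_R'$, so at least $b/2$ of the $N_1\setminus N_2$-agents lie in $[0,\tfrac12]$, each contributing $1-x_i\ge\tfrac12$. Dropping the nonnegative $N_2\setminus N_1$ contribution, this yields
$$SU(\mathbf{y}|\mathbf{c}) \ge a + \frac{b}{4}.$$

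Next I would bound the optimum from above using Lemma 2: $SU(OPT|\mathbf{c})\le n+(1-d)\,|N_1\cap N_2|\le n+a = 2a+b+c$. The crucial structural input is the definition of $j^*$, which gives $c\le b$, hence $SU(OPT|\mathbf{c})\le 2a+2b$. Combining the two bounds gives
$$\frac{SU(OPT|\mathbf{c})}{SU(\mathbf{y}|\mathbf{c})} \le \frac{2a+2b}{a+b/4}.$$
Viewing the right-hand side as a function of $a$ with $b$ fixed, the numerator of its derivative is $2(a+b/4)-(2a+2b)=-\tfrac{3}{2}b<0$, so it is decreasing in $a$ and therefore maximized at $a=0$, where it equals $\dfrac{2b}{b/4}=8$. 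This establishes the claimed bound.

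The main obstacle is not a single hard estimate but the bookkeeping of which contributions to keep and which to discard. The difficulty is that the $N_2\setminus N_1$-agents may contribute almost nothing to $SU(\mathbf{y}|\mathbf{c})$ yet contribute their full count $c$ to the optimum; the only thing preventing the ratio from exceeding $8$ is the defining inequality $c\le b$ of $j^*$, so the proof hinges on invoking that inequality at exactly the right place. Once $c$ is controlled this way, the monotonicity argument showing that the worst case occurs at $|N_1\cap N_2|=0$ is routine.
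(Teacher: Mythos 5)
Your proof is correct and follows essentially the same route as the paper: the same lower bound $SU(\mathbf{y}|\mathbf{c})\ge |N_1\cap N_2|+\tfrac{1}{4}|N_1\setminus N_2|$ on the mechanism's utility, the same appeal to Lemma 2, the same use of the maximality of $j^*$ (your $c\le b$ is the paper's $|N_1|\ge n/2$), and a monotonicity-in-$|N_1\cap N_2|$ argument to conclude the bound of $8$. Your explicit substitution of $c\le b$ and $d\ge 0$ before the single-variable monotonicity step, and your remark that the equality case falls under Proposition 1, are minor organizational refinements of the same argument.
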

\begin{proof}
    Without loss of generality, assume that \(j^*=1\), and $\mathbf{y}=(1,0)$ is the facility location profile that Mechanism 3 outputs.

The social utility of Mechanism 3 is
\begin{align*}
    SU(\mathbf{y}|\mathbf{c})&=\sum_{i \in N} u_i((x_i,p_i),\mathbf{y})\\
    &=\sum_{i \in N_1 \setminus N_2 }(1-x_i)+\sum_{i \in N_2 \setminus N_1 }x_i + \sum_{i \in N_1 \cap N_2}((1-x_i)+x_i)\\
    & \geq \frac{|N_1 \setminus N_2|}{2} \cdot \frac{1}{2}+|N_1 \cap N_2|\\
    &= \frac{|N_1|}{4}+\frac{3}{4}|N_1 \cap N_2|.
\end{align*}
Combining Lemma 2, we have
\begin{align}
    \frac{SU(OPT|\mathbf{c})}{SU(\mathbf{y}|\mathbf{c})} \leq 4 \cdot \frac{n+(1-d) \cdot |N_1 \cap N_2|}{|N_1|+3|N_1 \cap N_2|}.
\end{align}

The expression on the right-hand side of Eq. (1) is monotonically decreasing with both \(|N_1|\) and \(|N_1 \cap N_2|\).
Note that \(|N_1| \geq n/2\), since \(|N_1 \setminus N_2| \geq |N_2 \setminus N_1|\). Therefore
\[\frac{SU(OPT|\mathbf{c})}{SU(\mathbf{y}|\mathbf{c})} \leq 8.\]
\qed
\end{proof}

Combining Proposition \(1\) and Proposition \(2\), we obtain the following result.

\begin{theorem}\label{thm5}
    The approximation ratio of Mechanism 3 is at most $8$.
\end{theorem}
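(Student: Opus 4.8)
The plan is to combine Proposition~\ref{prop1} and Proposition~\ref{prop2}, which together cover all possible instances. Since the two cases of Mechanism 3 are determined entirely by comparing the cardinalities $|N_1 \cap N_2|$ and $|N_{j^*} \setminus N_{3-j^*}|$, every instance $\mathbf{c}$ falls into exactly one of them: either $|N_1 \cap N_2| \geq |N_{j^*} \setminus N_{3-j^*}|$, which is handled by Proposition~\ref{prop1}, or $|N_{j^*} \setminus N_{3-j^*}| > |N_1 \cap N_2|$, which is handled by Proposition~\ref{prop2}. The approximation ratio is the supremum of $\frac{SU(OPT|\mathbf{c})}{SU(f|\mathbf{c})}$ over all instances, so it equals the maximum of the two worst-case ratios established in the propositions, and it suffices to verify that each of these is at most $8$.

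First I would bound the Case~1 instances using Proposition~\ref{prop1}, where the guarantee splits according to the value of $d$. For $d \in [0,\frac{1}{2}]$ the bound $2(4-d)$ is a decreasing function of $d$, attaining its maximum $8$ at $d=0$ and its minimum $7$ at $d=\frac{1}{2}$, so it never exceeds $8$. For $d \in (\frac{1}{2},1]$ the bound $\frac{4-d}{d}=\frac{4}{d}-1$ is likewise decreasing in $d$, with supremum $7$ as $d \to \frac{1}{2}^{+}$ and value $3$ at $d=1$, so it stays strictly below $8$. Hence every Case~1 instance satisfies $\frac{SU(OPT|\mathbf{c})}{SU(f|\mathbf{c})} \leq 8$. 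For the Case~2 instances, Proposition~\ref{prop2} directly yields $\frac{SU(OPT|\mathbf{c})}{SU(f|\mathbf{c})} \leq 8$. Taking the maximum over the two cases, the approximation ratio of Mechanism 3 is at most $\max\{8,8\}=8$.

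Since both propositions are already established, no genuine technical obstacle remains; the only step requiring care is the monotonicity-in-$d$ check for the Case~1 bounds, confirming that neither $2(4-d)$ nor $\frac{4-d}{d}$ overshoots $8$ on its respective range of $d$. The extremal ratio $8$ is attained either at $d=0$ within Case~1 or throughout Case~2, which indicates that the analysis is tight at the level of these two subroutines.
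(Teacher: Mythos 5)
Your proposal is correct and matches the paper's approach: the paper proves Theorem 5 exactly by combining Proposition 1 and Proposition 2, whose stated bounds ($2(4-d)\in[7,8]$, $\frac{4-d}{d}\in[3,7)$, and $8$) already cover all instances and never exceed $8$. Your explicit monotonicity checks in $d$ simply spell out what the paper's bracketed ranges assert.
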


\begin{theorem}\label{thm6}
     Under the social utility objective, the approximation ratio of any deterministic strategyproof mechanism is at least 2.
\end{theorem}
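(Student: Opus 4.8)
The plan is to establish the lower bound of $2$ by constructing a family of instances and arguing that any deterministic strategyproof mechanism must perform badly on at least one of them. Since we only need a lower bound, it suffices to work in the special setting ($d=0$) — a lower bound there immediately applies to the general setting with $d\in[0,1]$, and indeed the theorem is stated to cover both. I would use a small number of agents and exploit the obnoxious-facility structure: an obnoxious facility wants to be far from every affected agent, so for a single agent the social utility is maximized by placing the facility at whichever endpoint ($0$ or $1$) is farther from her. The essential tension a strategyproof mechanism faces is that it cannot let an agent's reported location finely control which endpoint is chosen without opening the door to profitable misreports.

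First I would set up the simplest nontrivial instance. Consider a single agent with preference affecting one facility (say $p=(1,0)$), located at $x$. The optimal placement puts that facility at distance $\max\{x,1-x\}$, which is at least $1/2$. By strategyproofness I want to argue the mechanism's output for that facility can take essentially only values that are not responsive enough. The key step is to consider two specific locations, for instance $x=0$ and $x=1$ (or symmetric points near the endpoints), and to show that a single strategyproof mechanism cannot simultaneously place the facility far from an agent near $0$ and far from an agent near $1$. I expect to pin down the facility location $y(x)$ as a function of $x$ and use the strategyproof inequality $u(x,f(x))\ge u(x,f(x'))$ for the two relevant reports to force $y$ to be (weakly) monotone or constant in a way that caps the achievable utility at roughly $1/2$ on one of the two instances while the optimum is $1$.

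Concretely, I would take two instances that agree on everything public (here just the single agent's preference) and differ only in the agent's reported location, chosen so that the optimum social utility is close to $1$ in each but any fixed mechanism is forced to yield utility close to $1/2$ in at least one of them. A clean choice is $x_1 = 0$ (optimum places the facility at $1$, utility $1$) versus $x_2 = 1$ (optimum places at $0$, utility $1$). Let $y_1,y_2$ denote the facility position the mechanism assigns when the agent reports $0$ and $1$ respectively. Strategyproofness applied at report $0$ gives $|0-y_1|\ge|0-y_2|$, i.e. $y_1\ge y_2$; applied at report $1$ gives $|1-y_2|\ge|1-y_1|$, i.e. $y_2\le y_1$ again — so these are consistent, and I instead exploit that the two utilities must satisfy $y_1 + (1-y_2) \le$ something, forcing $\min\{y_1,\,1-y_2\}\le 1/2$. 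Then the worse of the two instances has social utility at most $1/2$ against an optimum of $1$, giving ratio at least $2$.

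The main obstacle I anticipate is making the strategyproofness argument airtight: with a single agent and a \emph{single} affected facility, the second facility $F_2$ is unconstrained and its placement is free, so I must be careful that the freedom in $y_2$ (the unaffected facility) does not rescue the mechanism, and that the monotonicity I extract genuinely forces a loss of factor $2$ rather than something smaller. I would handle this by noting the unaffected facility contributes zero to this agent's utility and hence to the social utility, so only the affected-facility coordinate matters, and by choosing the two reports symmetrically about $1/2$ so the combined strategyproofness constraints squeeze the attainable utility to at most $1/2$ on one side. If a single agent proves too weak to force exactly $2$ (e.g.\ if the mechanism can hedge at $1/2$), I would add a symmetric second agent or perturb the two locations slightly inward from the endpoints to break ties, but I expect the two-report endpoint construction above to yield the bound of $2$ directly.
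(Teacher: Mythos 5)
There is a genuine gap: your core single-agent construction cannot yield any lower bound, let alone $2$. For one agent with preference $(1,0)$, the mechanism ``place $F_1$ at the endpoint of $\{0,1\}$ farthest from the reported location'' is strategyproof \emph{and} exactly optimal: a truthful agent at $x\le 1/2$ gets utility $1-x\ge 1/2$, and any misreport either leaves the facility at $1$ or moves it to $0$, yielding utility $x\le 1-x$; the case $x>1/2$ is symmetric. So the single-agent setting admits a $1$-approximate strategyproof mechanism, and no instance family with one agent can force a ratio above $1$. Concretely, the constraint you hoped to extract --- that $\min\{y_1,\,1-y_2\}\le 1/2$ for the outputs $y_1,y_2$ under reports $0$ and $1$ --- simply does not follow: as you yourself computed, strategyproofness between those two reports gives only $y_1\ge y_2$, which is satisfied by $y_1=1$, $y_2=0$, i.e.\ by the mechanism that is optimal on both instances. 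Your fallback remark (``add a symmetric second agent'') points in the right direction but is where all the actual work lies, and it is not carried out.

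The missing idea, which is what the paper's proof supplies, is to use \emph{two} agents whose preferred endpoints conflict: both with preference $(1,0)$, located at $1/3$ and $2/3$. Then the optimum is $1$ (facility at either endpoint), but the two agents pull in opposite directions, so the mechanism must ``choose a side,'' and strategyproofness limits how much the choice can respond to a single agent's report. The paper's case analysis: if the mechanism places $y_1\in[1/3,2/3]$, the social utility is only $1/3$ (ratio $3$); if $y_1\in[0,1/3)$, consider agent $1$ misreporting to $x_1'=0$ --- strategyproofness pins the new location $y_1'$ within distance $1/3-y_1$ of $1/3$, hence $y_1'\in[0,2/3]$, so on the instance $(0,2/3)$ the mechanism gets social utility exactly $2/3$ while the optimum is $4/3$, giving ratio $2$; the case $y_1\in(2/3,1]$ is symmetric. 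Note also that the interesting locations are interior points like $1/3$ and $2/3$, not the endpoints $0$ and $1$ you chose: an agent already at an endpoint has no room for the ``squeeze'' argument, whereas an interior agent misreporting \emph{to} an endpoint is what generates the binding strategyproofness constraint.
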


\subsection{Randomized Mechanisms}\label{subsec4.2}
In this subsection, we consider randomized mechanisms. First, by setting an appropriate probability distribution, a randomized group strategyproof mechanism with approximation ratio of at most 2 can be achieved. Then, we provide a lower bound of \(\frac{14}{13}\) on the approximation ratio of any randomized strategyproof mechanism.\\

\noindent \textbf{Mechanism 4}.\quad Given an instance \(\mathbf{c}=(\mathbf{x},\mathbf{p})\), return the facility location profile \((0,1)\) and \((1,0)\) with probability of \(\frac{1}{2}\), respectively.

\begin{theorem}\label{thm7}
    Mechanism 4 is a group strategyproof mechanism with approximation ratio of at most $2$.
\end{theorem}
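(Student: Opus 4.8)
The plan is to prove both claims of Theorem 7 separately, and since Mechanism 4 is structurally simpler than Mechanism 2, I expect both to follow quickly. For group strategyproofness, I would observe that Mechanism 4 ignores the reported location profile $\mathbf{x}$ entirely: it always returns $(0,1)$ and $(1,0)$ each with probability $\frac{1}{2}$, regardless of what agents report. Consequently, no coalition $S$ of agents can alter the output distribution by jointly misreporting, so every agent's utility is invariant under any joint deviation. This immediately yields group strategyproofness (and in fact the stronger conclusion that the utility of \emph{every} agent in $S$, not just one, is unchanged). This mirrors the one-line argument already used for Mechanism 2.

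For the approximation ratio, the key computation is the expected social utility. First I would compute, for a single agent $i$, her expected utility under the two equiprobable outcomes. For an agent $i \in N_1 \cap N_2$ affected by both facilities, her utility is $|x_i - 0| + |x_i - 1| = 1$ under outcome $(0,1)$ and $|x_i - 1| + |x_i - 0| = 1$ under outcome $(1,0)$, so her expected utility is exactly $1$. For an agent $i \in N_1 \setminus N_2$ affected only by $F_1$, her utility is $x_i$ under $(0,1)$ and $1 - x_i$ under $(1,0)$, giving expected utility $\frac{1}{2}(x_i + (1 - x_i)) = \frac{1}{2}$; symmetrically, an agent $i \in N_2 \setminus N_1$ has utility $1 - x_i$ under $(0,1)$ and $x_i$ under $(1,0)$, again yielding expected utility $\frac{1}{2}$. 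Summing over all agents gives
\begin{align*}
SU(f \mid \mathbf{c}) &= \sum_{i \in N_1 \cap N_2} 1 + \sum_{i \in N_1 \setminus N_2} \tfrac{1}{2} + \sum_{i \in N_2 \setminus N_1} \tfrac{1}{2} \\
&= |N_1 \cap N_2| + \tfrac{1}{2}\bigl(|N_1 \setminus N_2| + |N_2 \setminus N_1|\bigr).
\end{align*}

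To finish, I would rewrite this using $|N_1 \setminus N_2| + |N_2 \setminus N_1| = n - |N_1 \cap N_2|$, obtaining $SU(f \mid \mathbf{c}) = \frac{1}{2}\bigl(n + |N_1 \cap N_2|\bigr)$. Invoking Lemma 2, which bounds the optimal social utility by $n + (1-d)\cdot|N_1 \cap N_2| \leq n + |N_1 \cap N_2|$, the approximation ratio satisfies
\[
\frac{SU(OPT \mid \mathbf{c})}{SU(f \mid \mathbf{c})} \leq \frac{n + |N_1 \cap N_2|}{\frac{1}{2}(n + |N_1 \cap N_2|)} = 2.
\]
The only subtlety worth checking is whether Mechanism 4 respects the minimum distance constraint: both outputs $(0,1)$ and $(1,0)$ place the facilities at distance $1 \geq d$ apart, so feasibility holds for every $d \in [0,1]$, and no step of the argument is genuinely delicate. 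I do not anticipate a real obstacle here; the main thing is to present the per-agent utility split between the two outcomes cleanly.
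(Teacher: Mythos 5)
Your proposal is correct and follows essentially the same route as the paper's proof: group strategyproofness follows because the mechanism ignores all reports, the expected social utility equals $\frac{1}{2}(n+|N_1\cap N_2|)$, and Lemma 2 then gives the ratio of $2$. The only differences are presentational --- you group agents by preference type ($N_1\cap N_2$, $N_1\setminus N_2$, $N_2\setminus N_1$) rather than summing over $N_1$ and $N_2$ as the paper does, and you close with the direct observation $1-d\leq 1$ instead of the paper's monotonicity argument, both of which are equivalent.
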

\begin{proof}
    Mechanism 4 is group strategyproof since it does not depend on the agents' reported locations and the known preferences.

Let \(f\) denote Mechanism 4. For any instance \(\mathbf{c}=(\mathbf{x},\mathbf{p})\), the social utility of Mechanism 4 is
\begin{align*}
    SU(f|\mathbf{c})&=\frac{1}{2} (\sum_{i \in N_1}x_i+\sum_{i \in N_2}(1-x_i))+\frac{1}{2} (\sum_{i \in N_1}(1-x_i)+\sum_{i \in N_2}x_i)\\
    &=\frac{1}{2} (\sum_{i \in N_1}x_i+\sum_{i \in N_1}(1-x_i))+\frac{1}{2} (\sum_{i \in N_2}(1-x_i)+\sum_{i \in N_2}x_i)\\
    &=\frac{1}{2}(|N_1|+|N_2|)\\
    &= \frac{1}{2}(n +|N_1\cap N_2|).
\end{align*}

Combining Lemma 2, we have 
\begin{align}
    \frac{SU(OPT|\mathbf{c})}{SU(f|\mathbf{c})} \leq 2 \cdot \frac{n+(1-d) \cdot |N_1 \cap N_2|}{n+|N_1\cap N_2|}.
\end{align}

The expression on the right-hand side of Eq. (2) is monotonically decreasing with \(|N_1\cap N_2|\). Note that \(|N_1 \setminus N_2| \geq 0 \).
Therefore
\[\frac{SU(OPT|\mathbf{c})}{SU(f|\mathbf{c})} \leq 2.\]
\qed
\end{proof}

\begin{theorem}\label{thm8}
    Under the social utility objective, the approximation ratio of any randomized strategyproof mechanism is at least \( 14/13 \approx 1.077\).
\end{theorem}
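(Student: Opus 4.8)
The plan is to prove the bound by exhibiting a small family of instances on which no randomized strategyproof mechanism can beat $14/13$, and to keep the family as simple as possible by collapsing it to a \emph{single}-obnoxious-facility question. Concretely, I would let every agent have preference $p_i=(1,0)$, so that $N_2=\varnothing$ and $|N_1\cap N_2|=0$; then only $F_1$ is relevant, the location of $F_2$ and the constraint $|y_1-y_2|\ge d$ play no role whatsoever, and the social utility is just $\sum_{i\in N_1}|x_i-y_1|$. This has two payoffs: the analysis becomes one–dimensional, and since $d$ never enters, the very same bound holds for every $d\in[0,1]$, which is exactly why the statement is claimed for both the special and the general setting.

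Next I would fix two instances $A$ and $B$ that differ only in the reported location of one \emph{pivotal} agent, tuned so that their unique optimal facility placements sit at opposite endpoints. A convenient candidate keeps one agent near the right end and moves the pivotal agent across the threshold where the optimum switches endpoints: two agents at $\{0,\,9/10\}$ in $A$ and at $\{1/5,\,9/10\}$ in $B$, for which the optimum is $y_1=1$ in $A$ and $y_1=0$ in $B$, each of value $11/10$. The qualitative feature I am after is that the optimal rule is not strategyproof: under it, the pivotal agent in $B$ (true location $1/5$) has the facility placed at $0$, right next to her, and would strictly gain by reporting her $A$–location $0$, after which the rule places the facility at $1$, giving utility $4/5>1/5$. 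Hence any strategyproof mechanism must sacrifice social utility on at least one of $A,B$, and the point is to quantify that sacrifice.

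To handle a \emph{randomized} mechanism, I would use that the expected social utility depends only on the marginal distribution $\mu_A$ (resp.\ $\mu_B$) of the facility location, by linearity of expectation, so that $SU_A,SU_B$ are linear functionals of $\mu_A,\mu_B$ applied to the piecewise–linear maps $y\mapsto\sum_i|x_i-y|$. Strategyproofness of the pivotal agent across $A$ and $B$ then becomes two linear inequalities on expected distances, namely $\mathbb{E}_{\mu_A}|x^\ast_A-y|\ge\mathbb{E}_{\mu_B}|x^\ast_A-y|$ and $\mathbb{E}_{\mu_B}|x^\ast_B-y|\ge\mathbb{E}_{\mu_A}|x^\ast_B-y|$, where $x^\ast_A,x^\ast_B$ are her true locations. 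Maximizing $\min(SU_A,SU_B)$ subject to these inequalities is a small linear program over distributions on $[0,1]$; I would argue the optimum is attained with $\mu_A,\mu_B$ supported on $\{0,1\}$, where the two inequalities force $\mu_A=\mu_B$, reduce to the one–parameter family (mass $p$ at $1$), and read off $\min(SU_A,SU_B)=1$ at $p=1/2$. Against an optimum of $11/10$ this already yields a ratio of $11/10\ge 14/13$, so the stated bound follows a fortiori (and the clean way to present it is simply to tune the two agent positions so the program's value lands on $14/13$).

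The main obstacle is precisely the last reduction: one must establish the bound against \emph{all} randomized mechanisms, i.e.\ all distributions $\mu_A,\mu_B$ on $[0,1]$, not only those supported on the endpoints. The two strategyproofness inequalities constrain only two moments of each marginal, $\mathbb{E}[y]$ and $\mathbb{E}|x^\ast-y|$, whereas $SU_A,SU_B$ see the full distributions, so I must rule out that interior mass buys extra social utility while still meeting the inequalities. The decisive check is that making $SU_B$ large wants mass near $0$, while the inequality $\mathbb{E}_{\mu_B}|1/5-y|\ge\mathbb{E}_{\mu_A}|1/5-y|$ wants mass near $1$ (the far endpoint is the only place $|1/5-y|$ is large); quantifying this tension shows the two objectives cannot simultaneously exceed the endpoint value, which collapses the infinite-dimensional program to the finite vertex computation above. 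Once endpoint-optimality is proved, pinning the exact constant is just arithmetic in a one-parameter problem.
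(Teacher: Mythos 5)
Your overall route (collapse to a single obnoxious facility via $p_i=(1,0)$, take two instances differing in one pivotal agent, and reduce strategyproofness to two linear inequalities on the marginals $\mu_A,\mu_B$) is legitimate and is a genuinely different mechanism than the paper's, but the proof has a real gap at exactly the step you flagged as the main obstacle: the claim that the max--min program is attained by distributions supported on $\{0,1\}$, forcing $\mu_A=\mu_B$ and value $1$, is \emph{false}. Concretely, take $\mu_A=\tfrac{3}{7}\delta_{1/5}+\tfrac{4}{7}\delta_{1}$ and $\mu_B=\tfrac{4}{7}\delta_{0}+\tfrac{3}{7}\delta_{1}$. Both strategyproofness inequalities hold: $\mathbb{E}_{\mu_A}[y]=\tfrac{23}{35}\ge \tfrac{15}{35}=\mathbb{E}_{\mu_B}[y]$, and $\mathbb{E}_{\mu_B}\big[|1/5-y|\big]=\tfrac{16}{35}=\mathbb{E}_{\mu_A}\big[|1/5-y|\big]$. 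Yet $SU_A=\tfrac{3}{7}\cdot\tfrac{9}{10}+\tfrac{4}{7}\cdot\tfrac{11}{10}=\tfrac{71}{70}$ and $SU_B=\tfrac{4}{7}\cdot\tfrac{11}{10}+\tfrac{3}{7}\cdot\tfrac{9}{10}=\tfrac{71}{70}$, both strictly above $1$. The reason your ``decisive check'' fails is that you only considered relieving the tension by moving $\mu_B$'s mass; the slack can instead be created on the $\mu_A$ side. The map $y\mapsto y+|y-9/10|$ is constant $(=9/10)$ on all of $[0,9/10]$, so shifting $\mu_A$'s low mass from $0$ to $1/5$ costs nothing in $SU_A$ while driving $\mathbb{E}_{\mu_A}|1/5-y|$ down on that mass; this slackens the second constraint and lets $\mu_B$ pile mass at $0$, where $SU_B$ is maximal. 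So interior (non-endpoint) support genuinely buys utility, and the infinite-dimensional program does not collapse to your two-point vertex computation.

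The damage is quantitative but not fatal: restricting supports to the breakpoints $\{0,1/5,9/10,1\}$ (which \emph{is} a valid reduction, since all functionals involved are piecewise linear with those breakpoints) and solving the resulting LP gives value $\tfrac{71}{70}$, so the best bound your pair of instances with these two constraints can certify is $\frac{11/10}{71/70}=\frac{77}{71}\approx 1.0845$, not $\tfrac{11}{10}$. Since $\tfrac{77}{71}>\tfrac{14}{13}$, your approach is repairable, but the repair requires actually solving the LP correctly rather than asserting endpoint-optimality. For comparison, the paper avoids the LP entirely: it takes the symmetric instance $\mathbf{x}=(1/6,5/6)$, observes that there $SU\le OPT=1$ so some agent has expected utility at most $1/2$, moves that agent to $1$ (where $OPT$ jumps to $7/6$), and uses a \emph{single} strategyproofness inequality to bound $\Pr[Y_1'<1/6]\le 3/4$ and hence the second instance's social utility by $13/12$, giving the ratio $\frac{7/6}{13/12}=\frac{14}{13}$. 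That one-sided argument trades a weaker constant for a much shorter verification; your two-sided LP route, done correctly, yields a slightly stronger bound but needs the extreme-point analysis you did not supply.
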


\section{Conclusions and Future Work}\label{sec5}

In this paper, we studied mechanism design for the two-obnoxious-facility-location problem with optional preferences and minimum distance constraint. For the objective of maximizing the social utility, we designed deterministic and randomized strategyproof mechanisms with proven approximation ratios, and derived lower bounds of 2 and \(\frac{14}{13}\) on the approximation ratios for any deterministic and randomized strategyproof mechanism, respectively.

There are many other interesting future research directions. For example, narrow the gap between the upper and the lower bounds for deterministic and randomized mechanisms in our settings.
We can also extend to other preference settings where agents may have triple preferences or dual preferences.
In addition, it would be interesting to change the assumption about which information is public or private, for example, assuming locations are public information and preferences are private information.
Finally, it might make sense to consider other social objective functions, such as the sum of square distance~\cite{squares} or the egalitarian welfare~\cite{egalitarian}.

\vspace{1em}
\noindent  \textbf{Acknowledgments.}
   This research was supported in part by the National Natural Science Foundation of China (12201590, 12171444) and Natural Science Foundation of Shandong Province (ZR2024MA031).

\subsubsection{\discintname}
\begin{footnotesize}
   The authors declare that they have no conflict of interest.
\end{footnotesize}

%
% ---- Bibliography ----
%
% BibTeX users should specify bibliography style 'splncs04'.
% References will then be sorted and formatted in the correct style.
%
% \bibliographystyle{splncs04}
% \bibliography{mybibliography}
%
\bibliographystyle{splncs04}
\bibliography{main}

\clearpage

\appendix
\section{Appendix}

\textbf{Proof of Theorem 1.}
\begin{proof}
    For any instance \(\mathbf{c}=(\mathbf{x},\mathbf{p})\), observe that $N_{1}$ and $N_{2}$ only depend on the preferences of the agents, which are public information. We now discuss four cases with respect to the facility location profile $\mathbf{y} = (y_{1},y_{2})$ that Mechanism 1 outputs.

Case 1. $(y_{1},y_{2}) = (1,1)$. In this case, each agent \(i \in N\) located in $[0,1/2]$ has no incentive to misreport since her utility is maximized, whether her preference is $p_{i} = (1,0),\ (0,1)\ \text{or}\ (1,1)$. Clearly, agents located in $(1/2,1]$ cannot affect the locations of $F_{1}$ and $F_{2}$, and have no incentive to misreport.

   Case 2. $(y_{1},y_{2}) = (0,0)$. This case is similar to Case 1.

   Case 3. $(y_{1},y_{2}) = (1,0)$. We only consider the agents located in \([0,\frac{1}{2}]\). In this case, each agent \(i \in N_1 \setminus N_2\)  has no incentive to misreport since her utility is maximized. Agents in \(N_2\setminus N_1\) cannot affect the output of Mechanism 1, and have no incentive to misreport. If agent \( i \in N_1 \cap N_2\) misreports her location as $x_{i}' \in (\frac{1}{2},1]$, then the new facility location profile \(\mathbf{y}'=(1,0)\) or \((0,0)\). Since the utility of agent $i$ does not increase, she has no incentive to misreport. The analysis of agents located in \((\frac{1}{2},1]\) is similar to that of agents located in \([0,\frac{1}{2}]\).

    Case 4. $(y_{1},y_{2}) = (0,1)$. This case is similar to Case 3.
\qed
\end{proof}

\noindent\textbf{Proof of Lemma 1.}
\begin{proof}
   For any facility location \(\mathbf{y}=(y_1,y_2)\), the utility of agent \(i \in (N_1 \backslash N_2) \cup (N_2 \backslash N_1)\) is
\[u_i((x_i,p_i),\mathbf{y})=\sum_{j \in [2]} p_{ij} \cdotp |x_i-y_j| \leq 1 ,\] 
and the utility of agent \(i \in N_1 \cap N_2\) is
\[u_i((x_i,p_i),\mathbf{y})=\sum_{j \in [2]} p_{ij} \cdotp |x_i-y_j| = |y_1-x_i|+|y_2-x_i| \leq 2.\]
Thus, the optimal social utility is
\begin{align*}
    SU(OPT|\mathbf{c})&= \sum_{i \in N_1 \backslash N_2} u_i((x_i,p_i),\mathbf{y})+\sum_{i \in N_2 \backslash N_1} u_i((x_i,p_i),\mathbf{y})+\sum_{i \in N_1 \cap N_2} u_i((x_i,p_i),\mathbf{y})\\
    & \leq |N_1 \backslash N_2|+|N_2 \backslash N_1|+2|N_1 \cap N_2| \\
    &=n+|N_1 \cap N_2|.
\end{align*} 
\end{proof}

\noindent\textbf{Proof of Lemma 2.}
\begin{proof}
    For any facility location \(\mathbf{y}=(y_1,y_2)\), the utility of agent \(i \in (N_1 \setminus N_2) \cup (N_2 \setminus N_1)\) is
\[u_i((x_i,p_i),\mathbf{y})=|y_j-x_i| \leq 1,\]
and the utility of agent \(i \in N_1 \cap N_2\) is
$$u_i((x_i,p_i),\mathbf{y})=\sum_{j \in [2]} p_{ij} \cdotp |x_i-y_j| = |y_1-x_i|+|y_2-x_i|  \leq 2-d.$$
Thus, the optimal social utility is
$$SU(OPT|\mathbf{c}) \leq |N_1 \setminus N_2|+|N_2 \setminus N_1|+(2-d) \cdot |N_1 \cap N_2|=n+(1-d) \cdot |N_1 \cap N_2|.$$
\end{proof}

\noindent\textbf{Proof of Theorem 6.}
\begin{proof}
    Denote \(f\) as an arbitrary deterministic strategyproof mechanism. Consider an instance \(\mathbf{c}=(\mathbf{x},\mathbf{p})\) with two agents, where \(\mathbf{x} = (1/3, 2/3)\) and $\mathbf{p}=((1,0),(1,0))$. \( f(\mathbf{c})=(y_1,y_2) \) is the facility location profile that Mechanism \(f\) outputs. \(OPT(\mathbf{c})=\) \((0,1)\) or \((1,0)\), then \(SU(OPT|\mathbf{c})=1\). Three cases will be discussed.

Case 1. \( y_1 \in [\frac{1}{3}, \frac{2}{3}] \). In this case, the social utility of \(f\) is \( SU(f| (\mathbf{c})) =|y_1-x_1|+|y_1-x_2|= \frac{1}{3} \). It follows that \( \frac{SU(OPT|(\mathbf{c}))}{SU(f|(\mathbf{c}))} =3\).

Case 2. \( y_1 \in [0,\frac{1}{3}) \). Then the utility of agent 1 in \(\mathbf{c}\) is \( u_1((x_1,p_1),\mathbf{y}) =  \frac{1}{3}-y_1 \). Now, consider another instance \( \mathbf{c'}=(\mathbf{x'},\mathbf{p}) \), where $x_{1}'=0$ and \( x_2 = \frac{2}{3} \), with the facility location profile \( f(\mathbf{c'}) =(y_1',y_2') \). 
\(OPT(\mathbf{c'})=(1,0)\), then \(SU(OPT|\mathbf{c'})=\frac{4}{3}\).
By strategyproofness, \( u_1((x_1,p_1),f(\mathbf{c'}))=|y_1'-x_1| \leq \frac{1}{3}-y_1 \). Otherwise, agent 1 in \(\mathbf{c}\) can benefit by misreporting from \( x_1 \) to \( x_1' \). This implies that \( y_1 \leq y_1' \leq 2x_1-y_1 \). Since \( y_1 \in [0,\frac{1}{3}) \) and \( x_1 = \frac{1}{3} \), we have \( y_1' \in [0,\frac{2}{3}] \). It is easy to get that \( SU(f| (\mathbf{c'})) =|y_1'-x_1'|+|y_1'-x_2|= \frac{2}{3} \). Hence, \( \frac{SU(OPT|(\mathbf{c'}))}{SU(f|(\mathbf{c'}))} = 2\)

Case 3. \( y_1 \in (\frac{2}{3},1] \). This case is similar to Case 2.
\qed
\end{proof}

\noindent\textbf{Proof of Theorem 8.}
\begin{proof}
    Denote \(f\) as an arbitrary randomized strategyproof mechanism. Consider an instance \(\mathbf{c}=(\mathbf{x},\mathbf{p})\) with two agents, where \(\mathbf{x} = (1/6, 5/6)\) and $\mathbf{p}=((1,0),(1,0))$. Let \(Y=(Y_1,Y_2)\) be a random variable following probability distribution \(f(\mathbf{c})\).
\(OPT(\mathbf{c})=(0,1)\) or \((1,0)\), then \(SU(OPT|\mathbf{c}) = 1\). 

It is worth noting that \(SU(f|\mathbf{c}) = \mathbb{E}_{Y \sim f(\mathbf{c})}[|Y_1 - 1/6| + |Y_1 - 5/6|] \leq 1\). Without loss of generality, we assume \(\mathbb{E}_{Y \sim f(\mathbf{c})}[|Y_1 - 5/6|] \leq 1/2\).

Consider another instance \(\mathbf{c'}=(\mathbf{x'},\mathbf{p})\), where $x_{1}'=1/6$ and \( x_2 = 1 \). \(OPT(\mathbf{c'})=(0,1)\), then we have \( SU(OPT|\mathbf{c'}) = 7/6 \). Let \(Y'=(Y_1',Y_2')\) be a random variable following probability distribution \(f(\mathbf{c'})\). Due to strategyproofness, we obtain \(\mathbb{E}_{Y' \sim f(\mathbf{c'})}[|Y_{1}' - 5/6|] \leq \mathbb{E}_{Y \sim f(\mathbf{c})}[|Y_1 - 5/6|] \leq 1/2\).

Denote \(Pr[Y_{1}' < 1/6] = q\), then
\begin{align*}
    \frac{2}{3}q &\leq \mathbb{E}_{Y' \sim f(\mathbf{c'})}[|Y_{1}' - 5/6| \mid Y_{1}' < 1/6] \cdot q + \mathbb{E}_{Y' \sim f(\mathbf{c'})}[|Y_{1}' - 5/6| \mid Y_{1}' \geq 1/6] \cdot (1 - q)\\
    &= \mathbb{E}_{Y' \sim f(\mathbf{c'})}[|Y_{1}' - 5/6|]  \leq 1/2,
\end{align*}
which implies, \(q \leq \frac{3}{4}\).

Note that \( \mathbb{E}_{Y' \sim f(\mathbf{c'})}[SU(f| (\mathbf{x'},\mathbf{p})) \mid Y_{1}' \geq 1/6] = 5/6 \). Hence, we have
\begin{align*}
    &SU(f|( \mathbf{x'},\mathbf{p}))\\ &= \mathbb{E}_{Y' \sim f(\mathbf{c'})}[SU(Y_{1}', \mathbf{x'}) \mid Y_{1}' < 1/6] \cdot q + \mathbb{E}_{Y' \sim f(\mathbf{c'})}[SU(Y_1', \mathbf{x'}) \mid Y_{1}'\geq 1/6] \cdot (1 - q) \\
    &\leq \frac{7}{6}q + \frac{5}{6}(1 - q).
\end{align*}
Therefore, the approximation ratio of \(f\) is at least

\[
\frac{7/6}{\frac{7}{6}q + \frac{5}{6}(1 - q)} = \frac{7}{5 + 2q} \geq \frac{14}{13} \approx 1.077.
\]
\qed
\end{proof}

\end{document}